\newtheorem*{theorem}{Theorem}
\begin{document}
\toappear{2015 Proceedings of the ACM Database Programming Languages Conference}

\setlength{\pdfpageheight}{\paperheight}
\setlength{\pdfpagewidth}{\paperwidth}

\title{The Gremlin Graph Traversal Machine and Language}

\authorinfo{Marko A. Rodriguez}
           {Director of Engineering at DataStax, Inc. \\ Project Committee Member of Apache TinkerPop}
           {marko@datastax.com}

\maketitle

\begin{abstract}
Gremlin is a graph traversal machine and language designed, developed, and distributed by the Apache TinkerPop project. Gremlin, as a graph traversal machine, is composed of three interacting components: a graph $G$, a traversal $\Psi$, and a set of traversers $T$. The traversers move about the graph according to the instructions specified in the traversal, where the result of the computation is the ultimate locations of all halted traversers. A Gremlin machine can be executed over any supporting graph computing system such as an OLTP graph database and/or an OLAP graph processor. Gremlin, as a graph traversal language, is a functional language implemented in the user's native programming language and is used to define the $\Psi$ of a Gremlin machine. This article provides a mathematical description of Gremlin and details its automaton and functional properties. These properties enable Gremlin to naturally support imperative and declarative querying, host language agnosticism, user-defined domain specific languages, an extensible compiler/optimizer, single- and multi-machine execution models, hybrid depth- and breadth-first evaluation, as well as the existence of a Universal Gremlin Machine and its respective entailments.
\end{abstract}

\category{G.2}{Discrete Mathematics}{Graph Theory}

\keywords
graph traversal, finite automata, functional languages, virtual machines

\section{Introduction}

A graph is a structure composed of vertices and edges. Graphs have seen a resurgence in the database community with the growth of graph database technology \cite{graph:angles2008}. The query language of a graph database typically promotes either a \textit{graph traversal} or a \textit{graph pattern match} perspective. In the traversal model, traversers walk a graph according to particular user provided instructions and the result of the traversal is the locations of all halted traversers. In the pattern match model, a subgraph containing variables is created by the user and all graph elements that bind to those variables are returned as the result set. Gremlin supports both the imperative traversal-style and the declarative pattern match-style within the same framework. Furthermore, beyond supporting both popular models of graph querying, Gremlin's machine and language structures naturally facilitate Gremlin being 1.) embedded in a host programming language, 2.) extended by users wishing to leverage the terminology of their problem domain, 3.) optimized via an extensible set of compile-time rewrite rules, 4.) executed within a multi-machine compute cluster, 5.) evaluated in a depth-first, breadth-first, or hybrid ordering, and finally, 6.) represented within the graph itself via the theoretical existence of a Universal Gremlin Machine.

\section{Graph Traversal Machine}

Gremlin, as a graph traversal machine, is composed of three components: a graph $G$ (data), a traversal $\Psi$ (instructions), and a set of traversers $T$ (read/write heads). Conceptually, a collection of traversers in $T$ move about $G$ according to the instructions specified in $\Psi$. The computation is complete when either 1.) there no longer exists any traversers in $T$ or 2.) all existing traversers no longer reference an instruction in $\Psi$ (i.e. they have halted). For the former, the result set is the empty set. For the latter, the result set is the multi-set union of the $G$ locations those halted traversers reference.

\subsection{The Graph}

Gremlin operates over a multi-relational, attributed, digraph $G = (V,E,\lambda)$, where $V$ is a set of vertices, $E \subseteq (V \times V)$ is a multi-set of directed binary edges, and $\lambda : ((V \cup E) \times \Sigma^*) \rightarrow \left(U \setminus (V \cup E)\right)$ is a partial function that maps an element/string pair to an object in the universal set $U$ (excluding vertices and edges as allowed property values). Given $\lambda$, every vertex and edge can have an arbitrary number of key/value pairs called \textit{properties}. For example, vertices may have name, age, latitude properties and edges may have weight, date, permission properties. The universal set $U$ contains the set of all property values. These values may be restricted to longs ($\mathbb{N}$), doubles ($\mathbb{R}$), strings ($\Sigma^*$), etc., or subsets thereof, and thus given the schema of the graph, $U$ can be constrained accordingly. 

\subsection{The Traversal}

A traversal $\Psi$ is a tree of functions called \textit{steps}. Steps are arranged in the following two ways:
\begin{enumerate}
  \item \textbf{Linear motif}: The traversal $f \circ g \circ h$ is a linear chain of three steps where the output traversers of $f$ are the input traversers of $g$. Likewise, the output of $g$ is the input of $h$.
  \item \textbf{Nested motif}: The traversal $f(g \circ h) \circ k$ contains the nested traversal $g \circ h$ which is an argument of the step $f$. In this way, $f$ will leverage $g \circ h$ in its mapping of its input traversers to its output traversers which are then provided as input to $k$.
\end{enumerate}
A step $f \in \Psi$ defined as $f : A^* \rightarrow B^*$ maps a set of traversers located at objects of type $A$ to a set of traversers located at objects of type $B$. The Kleene star notation $A^*$, when used in the context of an interpretation function, denotes that multiple traversers may be at the same element in $A$. However, what is mapped is a unique set of traversers to a unique set of traversers. It is only the multi-set union ($\uplus$) of their locations in $G$ that may contain duplicates.

The Gremlin graph traversal language defines approximately 30 steps which can be understood as the \textit{instruction set} of the Gremlin traversal machine. These steps are useful in practice, with typically only 10 or so of them being applied in the majority of cases. Each of the provided steps can be understood as being a specification of one of the 5 general types enumerated below.\footnote{If the underlying host language supports lambda functions (and \texttt{LambdaVerificationStrategy} is disabled), then it is possible for users to leverage the common lambda parameterization idiom of functional programming. For instance, users can do \texttt{filter\{t.loops() < 5\}}. However, this is strongly discouraged as the provided lambda can not be subjected to Gremlin's compiler optimizations. Instead, the pure traversal form \texttt{loops().is(lt(5))} should be used, where is() is a type of filter() step.}
\begin{enumerate}
  \item $\textbf{map} : A^* \rightarrow B^*$, where $|\text{map}(T)| = |T|$. These functions map the traversers $T$ at objects of type $A$ to a set of traversers at objects of type $B$ without altering the traverser set size.
  \item $\textbf{flatMap} : A^* \rightarrow B^*$, where the output traverser set may be smaller, equal to, or larger than the input traverser set.
  \item $\textbf{filter} : A^* \rightarrow A^*$, where $\text{filter}(T) \subseteq T$. The traversers in the input set are either retained or removed from the output set.
  \item $\textbf{sideEffect} : A^* \rightarrow_x A^*$, where $\text{sideEffect}(T) = T$ . An identity function operates on the traversers though some data structure $x$ (typically in $G$) is mutated in some way.
  \item $\textbf{branch} : A^* \rightarrow^b B^*$, where an internal branch function $b : T \rightarrow \mathcal{P}(\Psi)$ maps a traverser to any number of the nested traversals' start steps.
\end{enumerate}

All the above steps can be represented as a specification of flatMap() -- i.e. map one set of traversers to another set of traversers. For instance, map() as a flatMap() simply maps each traverser in $T$ (at $A$) to a single traverser in $B$. For filter(), flatMap() either includes the original input set traverser or removes it, where $A = B$. If flatMap() can mutate outside data structures, then sideEffect() is simulated, where the input traverser set $T$ is the output traverser set. Finally, branch() is simulated by ensuring the internal logic of flatMap() include rules for choosing different mappings of the traversers in $T$ given their state.

\subsection{The Traverser}

A traverser unifies the graph and the traversal via a reference to an object in the graph and a reference to a step in the traversal. Formally, a traverser $t$ is an element in the 6-tuple set
\begin{equation*}
T \subseteq \left(U \times \Psi \times \left(\mathcal{P}(\Sigma^*) \times U\right)^* \times \mathbb{N}^+ \times U \times \mathbb{N}^+\right).
\end{equation*}
The first element of the tuple is the traverser's location in the graph $G$ (e.g. $v \in V$, where $V \subset U$).\footnote{The ``graph location" of a traverser is in $U$ as opposed to only $V \cup E$ because a traverser can move beyond vertices and edges by referencing arbitrary objects associated with the graph such as property keys, property values, and side-effect data structures.} The second element is the traverser's step location in the traversal $\Psi$. The third element is a sequence of sets of strings and objects called a \textit{labeled path}. For example, $(((a),x),((b,c),y),(\emptyset,z))$ denotes the traverser's path $x \leadsto y \leadsto z$ with respective step labels at each location. The forth element is the traverser's \textit{bulk} which denotes the number of equivalent traversers this particular traverser represents.\footnote{Traverser bulk is useful as an fundamental optimization, though it is not theoretically required.} The firth element is the traverser's \textit{sack} which is a local variable of the traverser. The sixth, and final element, is the traverser's \textit{loop counter} which specifies the number of times a traverser has gone through a loop sequence. The following functions project the aforementioned components of a traverser to their respective values.
\begin{enumerate}
  \item $\mu : T \rightarrow U$ maps a traverser to an object in $U$ (i.e. its location in the graph).
  \item $\psi : T \rightarrow \Psi$ maps a traverser to a step in $\Psi$ (i.e. its location in the traversal -- program counter).
  \item $\Delta : T \rightarrow (\mathcal{P}(\Sigma^*) \times U)^*$ maps a traverser to its labeled path (i.e. its history in the graph).
  \item $\beta : T \rightarrow \mathbb{N}^+$ maps a traverser to its bulk.
  \item $\varsigma : T \rightarrow U$ maps a traverser to its sack value.
  \item $\iota : T \rightarrow \mathbb{N}^+$ maps a traverser to its loop counter.
\end{enumerate}
Visually, a traverser $t \in T$ is a ``bundle" of local variables (metadata) with a projection to a location in the graph $G$ and a projection to a location in the traversal $\Psi$.
\begin{equation*}
G \longleftarrow\text{}_\mu \;\; \frac{t \in T}{\{\Delta,\beta,\varsigma,\iota\}} \;\; \text{}_\psi\longrightarrow \Psi
\end{equation*}

\section{Graph Traversal Language}

Gremlin, as a graph traversal language, is a \textit{functional language}. The purpose of the language is to enable a human user to easily define $\Psi$ and thus, program a Gremlin machine. The simplicity of Gremlin's grammar enables it to be embedded in the native programming language of the user.\footnote{TinkerPop distributes a Gremlin machine implemented in Java8 and a Gremlin language binding in both Java8 and Groovy. With the JVM being host to numerous programming languages, the wider TinkerPop community has provided Gremlin language bindings in Scala, Clojure, Ruby, JavaScript, and more. Conceptually, TinkerPop's Gremlin machine is a virtual machine implemented in Java that can be programmed via the numerous Java-based programming languages in existence.} In this way, for a developer, there is no discontinuity between their software code and their graph analysis code. 

In order for a language to host Gremlin, the language needs to support \textit{function composition} and \textit{function types} (i.e. functions as first-class entities or enable it via ``function objects."). With method chaining (a type of function composition), a natural fluent syntax is possible. For instance, the traversal $a \circ b \circ c$ is denoted \texttt{a().b().c()} in the dot notation-syntax of modern object-oriented programming languages. With function arguments, traversal nesting is possible. For instance, $a(b \circ c) \circ d$ is denoted \texttt{a(b().c()).d()}.

\subsection{A Simple Traversal}

The most basic graph traversal is one that moves traversers through the steps of $\Psi$ in a sequential order ($\Psi_1 \leadsto \Psi_2 \leadsto \ldots \leadsto \Psi_{|\Psi|}$) and where no step maintains internal, nested traversals. The example traversal below is a simple linear traversal that determines, in plain language, the age of the oldest person that Marko knows (assuming, for the sake of simplicity of discussion, that each vertex in the example graph has a unique name). 
\begin{verbatim}
  g.V().has("name","marko").
    out("knows").values("age").max()
\end{verbatim}
The first term, $V_g$ (\texttt{g.V()}), is the definition of a traverser set bijective to $V$, where $\biguplus_i \mu((V_g)_i) = V$. The above traversal can be written in curried functional notation as 
\begin{equation*}
\text{max}(\text{values}_\text{age}(\text{out}_\text{knows}(\text{has}_\text{name=marko}(V_g)))).
\end{equation*}
The starting traverser set $V_g$ is first processed by $\text{has}_\text{name=marko}$. A traverser set is returned that only contains a single traverser at the vertex named ``marko." The step $\text{out}_\text{knows}$ then maps the \textit{marko}-vertex traverser (parent) to a traverser set (children) located at those vertices that are outgoing \textit{knows}-adjacent to the \textit{marko}-vertex. The children have a new graph location, traversal step location, and a path that is the concatenation of their parent's path and their current location. An example child traverser, at this point, will be the 6-tuple
\begin{equation*}
(y, \text{values}_\text{age}, ((\emptyset,x),(\emptyset,y)), 1, \emptyset, 0),
\end{equation*}
where $\lambda(x,\text{name}) = \text{marko}$ and $\lambda((x,y),\text{label}) = \text{knows}$. Next, $\text{values}_\text{age}$ maps to a traverser set where each child traverser is located at the integer value of their current vertex's \textit{age}-property. Finally, $\text{max}()$ transforms the traverser's at $\mathbb{N}^*$ to a single traverser at a number representing the maximum number in the previous set (i.e. the oldest \textit{age}-value).
\begin{equation*}
\begin{array}{rlcr}
V_g &: \mathbf{0} \rightarrow V^* & & \text{flatMap} \\
\text{has}_\text{name=marko} &: V^* \rightarrow V^* & & \text{filter} \\ 
\text{out}_\text{knows} &: V^* \rightarrow V^* & & \text{flatMap} \\
\text{values}_\text{age} &: V^* \rightarrow \mathbb{N}^* & & \text{map} \\
\text{max} &: \left[\mathbb{N}^*\right] \rightarrow \mathbb{N} & & \text{map}
\end{array}
\end{equation*}
It is important to note that the domain of the $step_n$ is equal to the range of $step_{n-1}$. Furthermore, the domain of max() is $[N^*]$ and the range is $N$. The step max() is ``blocking" in that the entire traverser set is required as input before the single traverser $t$ is outputted, where $\mu(t) \in \mathbb{N}$. The notation $[A^*]$ denotes a barrier. Steps that reduce a traverser set to a single traverser by way of some binary operation are called \textit{reducing barrier steps}. 

In Gremlin, a traverser set can grow and shrink over the course of the computation. Traverser sets typically shrink due to filter() steps removing traversers, map()/flatMap() partial functions mapping to undefined locations in $G$ (e.g. Marko may not know anyone), or reducing barrier steps going from many-to-one.\footnote{As will be explain later, traverser sets also shrink when multiple traversers arrive at the same location in $G$ and, at which point, these traversers merge into a single traverser with a respective ``bulk" equal to the sum of the bulks of all merged traversers. However, while the set shrinks, the same logical number of traversers still exists.} Traverser sets grow due to one-to-many flatMap() steps. Gremlin traversers are \textit{furcating automata} in that if multiple options are met, then all options are taken. For instance, if a traverser is at a single vertex and the current step $\psi(t)$ maps to many adjacent vertices (e.g. $\text{out}_\text{knows}$ and Marko knows many people), then the traverser ``splits" (clones itself) and each child is placed at each adjacent vertex. The only modification to the child clones are new locations in $G$ and $\Psi$ as well as a new labeled path $\Delta$ which is a one-step extension of the parent traverser's path.

The language used in the discussion thus far states that a ``set of traversers" is being mapped between each step of the traversal. However, traversers are isolated entities maintaining their own metadata/state, where the step functions themselves have no state. This type of traverser isolation enables a traversal's evaluation order to change at different points in $\Psi$ as sometimes it is useful to use depth-first (one traverser at each step) and sometimes breadth-first (sets of traversers at each step). A Gremlin machine implementation can make use of a dataflow/stream construct \cite{dataflow:lee1995} and simulate breadth-first evaluation at particular points in the traverser stream via the insertion of an \textit{identity barrier step} with interpretation function $\text{barrier} : \left[U^*\right] \rightarrow U^*$.

\subsection{A Branching Traversal}

A branch in Gremlin is a split in $\Psi$. Formally,
\begin{equation*}
\text{branch}_\text{b}(t) = b(t)(t),
\end{equation*}
where $b : T \rightarrow \Psi$. The branching function $b$ determines, given the state of $t$, which internal traversal the traverser should follow. Depending on the particular branch step, the traverser $t$ may be sent down a single branch (e.g. choose()), a subset of the branches (e.g. repeat().emit()), or all branches (e.g. union()).

The choose() step is a branch step which provides the common ``if/else if/.../else" programming construct.
\begin{verbatim}
  g.V().choose(label()).
    option("person", out("created").count()).
    option("software", in("created").count()).
    option(none, label())
\end{verbatim}  
In the above traversal, if the traverser incoming to choose() is at a \textit{person}-vertex, then send the traverser down the branch that computes the number of projects that that person has created. If the traverser is at a \textit{software}-vertex, then send the traverser down the branch that computes the number of collaborators on that software project. Finally, if the traverser is located at neither a \textit{person}- nor \textit{software}-vertex, then send the traverser down the branch that yields the label of the vertex, where the \textit{none}-option refers to a branch that should be taken if no other options are valid (i.e. ``else"). If the \textit{none}-option did not exist, then choose() would act as a filter removing the option-less traverser from $T$. Note that option() is not a step, but a \textit{step modulator}. Step modulators are ``syntactic sugars" that manipulate the previous step in order to reduce the complexity of the modulated step's arguments (and respective function overloadings). In this way, choose() takes traversals as arguments and thus, maintains internal nested traversals, where the first traversal (label()) plus the option keys (e.g. ``person") form the branch function. The above choose() step is represented in curried form as
\begin{equation*}
\text{choose}_{\text{label}}(t) =
  \begin{cases}
    \text{count}(\text{out}_\text{created}(t)) &: \mu(\text{label}(t)) = \text{person} \\
    \text{count}(\text{in}_\text{created}(t)) &: \mu(\text{label}(t)) = \text{software} \\   
    \text{label}(t) &: \text{otherwise}.
  \end{cases}
\end{equation*}
Note that the domain and range of $\text{choose}_\text{label}()$ is
\begin{equation*}
\text{choose}_{\text{label}} : V^* \rightarrow \left(\mathbb{N}^+ \cup \Sigma^*\right)^*.
\end{equation*}
As such, any step following choose() must be able to accept either numbers or strings.

It is worth noting that Gremlin supports a more compact syntax for boolean-based ``if/else." If there are only two options, ``person" and \textit{none}, then the above traversal would be defined as below.
\begin{verbatim}
  g.V().choose(label().is("person"), 
    out("created").count(),
    label())
\end{verbatim}  

\subsection{A Recursive Traversal}

The examples presented thus far have the traversers moving from ``left to right" through the sequence of steps in $\Psi$. In order to support recursion (i.e. looping), it is necessary to set the traverser's $\psi$-program counter back to some previously seen step. An example of such a step is the recursive function
\begin{equation*}
\text{repeat}_p(t) =
\begin{cases}
\text{repeat}_p(t_{\iota+1}) &: p(t) = \textbf{true} \\
t_0 &: \text{otherwise},
\end{cases}
\end{equation*}
where $p : T \rightarrow \textbf{Bool}$ is some traverser predicate, $\iota(t_{\iota+1}) = \iota(t) + 1$ (i.e. increment the loop counter), and $\iota(t_0) = 0$ (i.e. reset the loop counter).

The following traversal returns the names of the vertices 5 outgoing steps from the vertex named ``marko."
\begin{verbatim}
  g.V().has("name","marko").
    repeat(out()).times(5).
      values("name")
\end{verbatim}
With times() being a step modulator, the repeat() step is functionally defined as
\begin{equation*}
\text{repeat}_{\iota < 5}(t) =
\begin{cases}
 \text{repeat}_{\iota < 5}(\text{out}(t_{\iota+1})) &:  \iota(t) < 5 \\
 t_{0} &: \text{otherwise}.
\end{cases}
\end{equation*}

Suppose it is necessary to get the names of all the vertices encountered along the 5 step walk emanating from the vertex named ``marko" (and not just those names 5 steps away).
\begin{verbatim}
  g.V().has("name","marko").
    repeat(out()).emit().times(5).
      values("name")
\end{verbatim}
If the traverser loops, it is also emitted along with its recursive mapping. Note that emit(), like times(), is a step modulator of repeat(), where
\begin{equation*}
\text{repeat}_{\iota < 5,\text{emit}}(t) =
\begin{cases}
\left( \text{repeat}_{\iota < 5,\text{emit}}(\text{out}(t_{\iota+1})), t_0\right) &: \iota(t) < 5 \\
t_0 &: \text{otherwise}.
\end{cases}
\end{equation*}

\subsection{A Path Traversal}

The third component of the traverser $t$'s 6-tuple is its labeled path $\Delta(t) \in \left(\mathcal{P}(\Sigma^*) \times U\right)^*$. Whenever a traverser is mapped to a new location in $G$, this location as well as the set of labels for the respective step in $\Psi$ are appended to the child traverser's path. For example, assume the following traversal.
\begin{verbatim}
  g.V().as("a").out().as("b","c").path()
\end{verbatim}
In the traversal above, the traverser $t$ will start at a particular vertex in $x \in V$. That location is labeled ``a" via the step modulator $\text{as}()$, where $\Delta(t) = (((a),x))$.  Next, the traverser $t$ will split itself amongst all the outgoing adjacent vertices of $x$, where one particular child traverser's path would be $\Delta(t') = (((a),x),((b,c),y))$ assuming $(x,y) \in E$. Thus,
\begin{equation*}
\text{path}(t) = t_{\Delta(t)},
\end{equation*}
where $\mu\left(t_{\Delta(t)}\right) = \Delta(t)$. A single halted traverser $t''$ from the traversal above would have
\begin{equation*}
\mu(t'') = (((a),x),((b,c),y))
\end{equation*} 
and
\begin{equation*}
\Delta(t'') = (((a),x),((b,c),y),(\emptyset, ((a),x),((b,c),y)))).
\end{equation*}
That is, the labeled path of $t''$, up to that point in the traversal, is an element in its path.

A traverser's path history is useful in the following enumerated situations.
\begin{enumerate}
  \item It is necessary to determine the (shortest)-path from vertex $x$ to vertex $y$.
  \item It is necessary to go back to some previous location of the traverser.
  \item It is necessary to determine if a particular location has already been visited.
\end{enumerate}
In terms of items 1 and 3, 
\begin{verbatim}
  g.V(x).repeat(out().simplePath()).
    until(is(y)).path().limit(1)
\end{verbatim}
will return the shortest, simple (non-looping) path from vertex $x$ to vertex $y$, where until() is a step modulator for repeat() and the filter
\begin{equation*}
\text{simplePath}(t) =
  \begin{cases}
    t &: \left|\bigcup_{i < |\Delta(t)|}{\mu\left(\Delta(t)_i\right)}\right| = \left|\Delta(t)\right| \\   
    \emptyset &: \text{otherwise}.
  \end{cases} 
\end{equation*}

\subsection{A Projecting Traversal}

In the previous subsection, it was stated that sometimes it is necessary to go back to some previous location in the traverser's path history. The following traversal does just that.
\begin{verbatim}
  g.V().as("a").out("knows").as("b").
    select("a","b").
      by(in("knows").count()).
      by(out("knows").count())
\end{verbatim}
When the traverser $t$ reaches select(), there will be two vertices labeled ``a" and ``b" in its path. The select() step generates two new traversers $t_{\Delta_a(t)}$ and $t_{\Delta_b(t)}$, where $\mu\left(t_{\Delta_a(t)}\right) = \Delta_a(t)$  and $\mu\left(t_{\Delta_b(t)}\right) = \Delta_b(t)$. Traverser $t_{\Delta_a(t)}$ will ultimately determine the number of incoming \textit{knows}-adjacent vertices to the ``a"-vertex and traverser $t_{\Delta_b(t)}$ will determine the number of outgoing \textit{knows}-adjacent vertices to the ``b"-vertex. The $\text{by}()$ step modulator specifies which traversal the ``a" and ``b" traversers should traverse. The curried function signature is
\begin{equation*}
\text{select}_\text{a,b} : V^* \rightarrow \mathcal{P}(\Sigma^* \times \mathbb{N}^+)^*, 
\end{equation*}
where an element in $\mathcal{P}(\Sigma^* \times \mathbb{N}^+)$ is a \texttt{Map<String,Long>} data structure in programming. The definition of the $\text{select}_\text{a,b}$ function is
\begin{equation*}
\text{select}_\text{a,b}(t) =
\left(
\begin{array}{c}
\left(a,\text{count}\left(\text{in}_\text{knows}\left(t_{\Delta_a(t)}\right)\right)\right), \\
\left(b,\text{count}\left(\text{out}_\text{knows}\left(t_{\Delta_b(t)}\right)\right)\right)
\end{array}
\right).
\end{equation*}

The $\text{where}()$ step is similar to $\text{select}()$ save that it filters a traverser based on its labeled path. The traversal below does the same selection as above, but only if the traverser's ``a" and ``b" vertices are not maternal siblings. Thus,
\begin{equation*}
\neg\text{where}_\text{a,b}(t) =
  \begin{cases}
    t &: t_{\Delta_b(t)} \notin \text{in}_\text{mother}\left(\text{out}_\text{mother}\left(t_{\Delta_a(t)}\right)\right) \\
    \emptyset &: \text{otherwise}
  \end{cases}
\end{equation*}
in the traversal
\begin{verbatim}
  g.V().as("a").out("knows").as("b").
    where(not(
      as("a").out("mother").in("mother").as("b"))).
    select("a","b").
      by(in("knows").count()).
      by(out("knows").count())
\end{verbatim}
The above syntax of \texttt{as("a")...as("b")} is syntactic sugar for \texttt{select("a")...where(eq("b"))}.

\subsection{A Centrality Traversal}

A graph is a complex structure that is difficult to reason about in its $n$-dimensional form. In order to extract meaningful information from graphs, numerous statistics have been developed in the domains of graph theory and network science \cite{netanal:brandes2005}. Every graph statistic maps an $n$-dimensional graph to some lower dimensional space. Typically, the reduction is either to a 0- or 1-dimensional space. For instance, $\text{size}_\text{V} : \mathbb{G} \rightarrow \mathbb{N}^+$ is a 0-dimensional statistic that maps a graph to the number of vertices it contains. \textit{Graph centrality} is a 1-dimensional graph statistic generally defined as $\mathbb{G} \rightarrow \mathbb{R}^{|V|}$, where the function maps a graph $G$ to a vector of centrality scores for each vertex in $V$. Centrality measures identify ``important," ``representative," ``connective," ``influential," etc. vertices in the graph. In \textit{eigenvector centrality}, centrality is formally defined as the probability that a vertex will be host to some random walker at some random point in time. This description can be represented by the linear algebraic equation $A\mathbf{v} = \lambda\mathbf{v}$, where
\begin{equation*}
  A_{i,j} = 
  \begin{cases}
    1 & : (i,j) \in E \\
    0 & : \text{otherwise}
  \end{cases}
\end{equation*}
is the adjacency matrix of the graph $G$ and $\mathbf{v}$ is the eigenvector whose components change, with each iteration, according to the scalar $\lambda$ (the eigenvalue).\footnote{The method of multiplying the vector $\mathbf{v}$ (initially being set to $\mathbf{1}^{|V|}$) against the adjacency matrix $A$ until $\lambda$ reaches a fixed point is called the \textit{power iteration method}. The resultant $\mathbf{v}$ (when $\lambda$ converges) is guaranteed to be the primary eigenvector with $\lambda$ being the largest eigenvalue. $\lambda$ defines the growth rate of $\mathbf{v}$ over any subsequent iterations. In Apache TinkerPop's Gremlin implementation, the growth rate $\lambda$ plays an important role in understanding when the ``bulk" $\beta$ of a traverser will overflow its 64-bit long representation.}  If the graph is strongly connected and aperiodic \cite{markov:haggstrom2002}, then the larger a vertex's value in $\mathbf{v}$, the more central it is in the graph. The equation $A\mathbf{v} = \lambda\mathbf{v}$ is expressed and solved in Gremlin via the traversal
\begin{verbatim}
  g.V().repeat(groupCount("m").out()).
    times(30).cap("m")
\end{verbatim}
The groupCount() side-effect step is defined as
\begin{equation*}
\text{groupCount}_\text{m}(t) =_{\text{m}} t : m[\mu(t)] = m[\mu(t)] + \beta(t).
\end{equation*}
Every time a traverser arrives at groupCount(``m"), its vertex location is indexed into the map $m \in \mathcal{P}(V \times \mathbb{N}^+)$, where $m \sim \mathbf{v}$ and $m$ is a  \texttt{Map<Vertex,Long>}. The keys of $m$ are the vertices in $V$ and the values are the number of times each vertex has been encountered thus far. Thirty iterations is provided as a value that is typically large enough to ensure convergence in natural graphs. It is possible for repeat()'s times() to be replaced with an until() that calculates whether the map's values have reached a steady state distribution. This requires comparing the unit vector $\hat{\mathbf{v}} = \frac{\mathbf{v}}{\|\mathbf{v}\|}$ at iteration $n$ with $\hat{\mathbf{v}}$ at iteration $n+1$. However, for the sake of simplicity, this computation is left to the reader to deduce.

The step \text{cap(``m")} is a \textit{supplying barrier step} in that is takes takes all the incoming traversers and emits a single traverser that is not a function of the incoming set, but a function of a side-effect data structure. In this case, a single traverser $t$ is emitted where $\mu(t) = m$.
\begin{equation*}
\text{cap}_\text{m} : [V^*] \rightarrow \mathcal{P}(V \times \mathbb{N}^+).
\end{equation*}

\subsection{A Mutating Traversal}

All of the examples presented thus far have only read from the graph. None have written to it. Gremlin provides a collection of graph mutation steps that can be used to add and remove vertices, edges, and properties. A few of these are outlined below.
\begin{equation*}
\begin{array}{rlcr}
\text{addOutE} &: V^* \rightarrow E^* & & \text{sideEffect/map} \\ 
\text{addInE} &: V^* \rightarrow E^* & & \text{sideEffect/map} \\
\text{addV} &: U^* \rightarrow V^* & & \text{sideEffect/map} \\
\text{property} &: \left(V \cup E\right)^* \rightarrow \left(V \cup E\right)^* & & \text{sideEffect} \\
\text{drop} &: \left[\left(V \cup E\right)^*\right] \rightarrow \mathbf{0} & & \text{sideEffect/map} \\
\end{array}
\end{equation*}

The two traversals below mutate the graph. The first one adds an inverse \textit{createdBy}-edge for every \textit{created}-edge. The second removes the original \textit{created}-edges.\begin{verbatim}
  g.V().as("a").out("created").
    addOutE("createdBy","a")
  
  g.V().outE("created").drop()
\end{verbatim}

\subsection{A Declarative Traversal}

Gremlin supports graph pattern matching analogous, in many respects, to SPARQL \cite{sparql:prud2004}. The primary benefit of Gremlin's pattern matching is that it encompasses only a single step within the Gremlin language and thus, it is possible to move from declarative pattern matching, to imperative traversals, and back all within the same traversal definition. Moreover, pattern matching is expressed as a traversal and thus, uses the same Gremlin traversal machine constructs presented thus far. 

The match() step's argument is a set of \textit{traversal patterns} that may be not()'d or nested via and() and or(). When a traverser enters match(), it will propagate through each pattern. A traverser that continues on to the step after match() is guaranteed to have had its labeled path values bind (via equality) to all the prefix and postfix variables of the match() traversal patterns.
\begin{verbatim}
  g.V().match(
      as("a").out("created").as("b"),
      as("b").in("created").count().is(gt(3)),
      as("b").in("created").as("c"),
      as("a").out("father").as("c")).
    dedup("a").
    select("a").by("name")
\end{verbatim}
The traversal above will return the name of all vertices who created a piece of software in collaboration with at least 4 people with the caveat that one of those collaborators is their father.
\newline\newline
$\text{match}(t) =$
\begin{equation*}
\begin{cases}
\text{match}(\text{bind}_\text{b}(\text{out}_\text{created}(t_{\Delta_a(t) \wedge \Delta_\text{m1}}))) &: \Delta_a(t) \neq \emptyset = \Delta_{m1}(t) \\
\text{match}(\text{is}_{> 3}(\text{count}(\text{in}_\text{created}(t_{\Delta_b(t) \wedge \Delta_\text{m2}})))) &: \Delta_b(t) \neq \emptyset = \Delta_{m2}(t) \\
\text{match}(\text{bind}_\text{c}(\text{in}_\text{created}(t_{\Delta_b(t) \wedge \Delta_\text{m3}}))) &: \Delta_b(t) \neq \emptyset = \Delta_{m3}(t) \\
\text{match}(\text{bind}_\text{c}(\text{out}_\text{father}(t_{\Delta_a(t) \wedge \Delta_\text{m4}}))) &: \Delta_a(t) \neq \emptyset = \Delta_{m4}(t) \\
t &: \text{otherwise},
\end{cases}
\end{equation*}
where
\begin{equation*}
\text{bind}_\text{x}(t) =
\begin{cases}
t_{\Delta_x(t) = \mu(t)} &: \Delta_x(t) = \emptyset \\
t &: \Delta_x(t) = \mu(t) \\
\emptyset &: \text{otherwise}.
\end{cases}
\end{equation*}
The $\text{match}()$ step is a recursively defined branch step, where each pattern/branch is taken once and only once. This is guaranteed by the hidden path label ``m\#" which is appended to the traverser's labeled path upon entering a branch pattern. Moreover, the prefix label $x$ of a pattern must exist in the traverser's path prior to the traverser taking that particular branch (i.e. $\Delta_x(t)$ can not equal $\emptyset$). Upon completing a pattern, if the traverser already has the postfix label in $\Delta(t)$ then that historic location must equal its current location $\mu(t)$ otherwise the traverser is deemed a non-match and is filtered out. However, if the postfix label does not exist in $\Delta(t)$, then it is added to the path of the traverser and thus, that variable is bound for all subsequent patterns. A traverser is able to exit the match() step when every pattern has been taken. Within the labeled path of a surviving traverser lies the match-variables and their respective bindings -- $\left(\Delta_a(t), \Delta_b(t), \Delta_c(t)\right)$.

The order in which match-patterns are executed is up to the $\text{match}()$ step implementation. The only caveat is that the two pattern selection criteria are respected -- 1.) the pattern has not been taken before and 2.) the prefix variable of the pattern already exists in the labeled path of the traverser. TinkerPop's Gremlin implementation provides two \textit{match algorithms} called \texttt{GreedyMatchAlgorithm} and \texttt{CountMatchAlgorithm} \cite{budget:broecheler2011}. The former simply finds the first pattern in the user provided list that meets the respective constraints and executes that pattern. The latter maintains, for each pattern, a dynamic \textit{multiplicity} variable that is equal to the number of traversers outputted by the pattern divided by the number of traversers inputted to the pattern. It then continually re-sorts the patterns favoring those that have the lowest multiplicity (i.e. it favors patterns that yield the largest set reductions).

\subsection{A Domain Specific Traversal}

The Gremlin traversal machine supports approximately 30 steps in its instruction set. These steps are deemed the most useful for most any traversal algorithm. The Gremlin traversal language is (nearly) in one-to-one correspondence with these steps. This alignment is due to the fact that the Gremlin language is a ``graph specific language" that forces the user to process their data from the graph-perspective of vertices (out(), inV()), edges (outE()), and properties (values()).

Typically, in practice, a graph structure represents some problem-domain that is best modeled as a graph. In these domains, the concept of vertices and edges may be understood as people and social relationships, for example. It is trivial for a user to define a \textit{domain specific language} that, when compiled, generates a traversal in terms of the $\sim$30 Gremlin steps. For instance, given the example graph used throughout this section, a hypothetical ``social traversal language" may allow the following domain specific query to be expressed.
\begin{verbatim}
  g.people().named("marko").
    who().know(well).people().
    who().created("software").
    are().named()
\end{verbatim}
Each one of these steps would be the composite or 1 or more Gremlin steps.
\begin{equation*}
\begin{array}{lcl}
\texttt{g.people()} &\mapsto& \texttt{g.V().hasLabel("person")} \\
\texttt{named("marko")} &\mapsto& \texttt{has("name","marko")} \\ 
\texttt{who()} &\mapsto& \texttt{identity()} \\
\texttt{know(well)} &\mapsto& \texttt{outE("knows").} \\
& & \;\;\texttt{has("weight",gt(0.75)).} \\
& & \;\;\texttt{inV()} \\
\texttt{created("software")} &\mapsto& \texttt{out("created")}. \\
& & \;\;\texttt{hasLabel("software")} \\
\texttt{are()} &\mapsto& \texttt{identity()} \\ 
\texttt{named()} &\mapsto& \texttt{values("name")} 
\end{array}
\end{equation*}
While the user expresses queries in the language of their domain, Gremlin ultimately evaluates those queries in terms of the underlying graph structure used to represent that domain.

\section{Traversal Strategies}

A traverser executes the step functions defined in $\Psi$. Sometimes a particular step sequence in $\Psi$ can be expressed in another way that is perhaps more efficient to execute. Gremlin \textit{traversal strategies} define translations which rewrite sections of a traversal (with typically, though not necessarily, the same semantics as the original traversal) \cite{pathalg:rodriguez2009}. There are 5 traversal strategy categories that form a total order, where a category's strategies are evaluated prior to moving to the next category and prior to traversal execution. Furthermore, within a category, strategies form a partial order where some strategies may require the execution of another strategy before (or after) its execution.
\begin{enumerate}
 \item \textbf{Decoration}: Rewrite a traversal given that certain steps serve only as syntactic placeholders.
 \item \textbf{Optimization}: Rewrite a traversal given that a particular step sequence can be expressed in a more efficient form.
 \item \textbf{Vendor Optimization}: Rewrite a traversal give that a particular step sequence can be expressed in a more efficient form given the underlying graph system.
 \item \textbf{Finalization}: Make any final adjustments to the traversal given the ``final" compiled form.
 \item \textbf{Verification}: Analyze the traversal and ensure it is valid given some set of constraints.
\end{enumerate}

The itemization below presents a subset of Gremlin's provided traversal strategies and demonstrates a few rewrite rule examples for each.\footnote{It is easy for vendors and users to register new traversal strategies with TinkerPop's Gremlin compiler. This is typically done by vendors as their graph system may maintain different optimizations accessible only through their custom interfaces. In such situations, vendors will write custom steps and traversal strategies that replace particular Gremlin step sequences with their vendor-specific steps.} The rewrite rules are represented such that when the left-hand pattern is matched, it is rewritten as the right-hand pattern.
\begin{itemize}
\item \textbf{ConjunctionStrategy} (Decoration): Gremlin supports prefix and infix notation for logical connectors. When infix notation is used, it is converted into the respective prefix-based representation.
\begin{equation*}
\begin{array}{lcl}
\texttt{a.and().b} &\mapsto& \texttt{and(a,b)} \\
\texttt{a.or().b} &\mapsto& \texttt{or(a,b)} \\ 
\texttt{a.or().b.and().c} &\mapsto& \texttt{or(a,and(b,c))} \\
\texttt{a.and().b.or().c} &\mapsto& \texttt{or(and(a,b),c)}
\end{array}
\end{equation*}
\item \textbf{IncidentToAdjacentStrategy} (Optimization): If a traversal touches the incident edges of a vertex on its way to its adjacent vertices, and executes no step that requires the analysis of those incident edges, then simply jump to the adjacent vertices without manifesting the respective incident edges.
\begin{equation*}
\begin{array}{lcl}
\texttt{a.outE().inV().b} &\mapsto& \texttt{a.out().b} \\
\texttt{a.bothE().otherV().b} &\mapsto& \texttt{a.both().b}
\end{array}
\end{equation*}
\item \textbf{AdjacentToIncidentStrategy} (Optimization): If a traversal only checks for the existence of adjacent vertices, it is typically cheaper to only manifest the incident edges.\footnote{This is especially true in distributed Gremlin where a vertex in a vertex partition maintains direct references to only its properties, incident edges and their properties. Without \texttt{AdjacentToIncidentStrategy}, the left-handed patterns would waste network bandwidth as traversers would be sent to adjacent vertices only to be counted/etc.}  
\begin{equation*}
\begin{array}{lcl}
\texttt{a.in().count().b} &\mapsto& \texttt{a.inE().count().b} \\
\texttt{a.where(out()).b} &\mapsto& \texttt{a.where(outE()).b} \\
\texttt{a.and(in(),out()).b} &\mapsto& \texttt{a.and(inE(),outE()).b}
\end{array}
\end{equation*}
\item \textbf{IdentityRemovalStrategy} (Optimization): If a traversal maintains an identity() step, remove the identity() step and fold any as() modulators into the previous step.
\begin{equation*}
\begin{array}{lcl}
\texttt{a.identity().b} &\mapsto& \texttt{a.b}
\end{array}
\end{equation*}
\item \textbf{FilterRankingStrategy} (Optimization): All filter() steps either remove or retain traversers in $T$. Thus, if there is a sequence of filter() steps, reorder them such that cheaper filters are executed first in the hopes of yielding large set reductions prior to executing more costly filters. Note that order() is considered a ``filter" even though it only sorts $T$ without ever removing traversers from $T$.
\begin{equation*}
\begin{array}{lcl}
\texttt{a.and(c,d).has().b} &\mapsto& \texttt{a.has().and(c,d).b} \\
\texttt{a.order().dedup().b} &\mapsto& \texttt{a.dedup().order().b}
\end{array}
\end{equation*}
\item \textbf{RangeByIsCountStrategy} (Optimization): If a traversal only needs to determine if a particular number of elements exist, then instead of counting the full set, limit the count to 1 plus the required number.
\begin{equation*}
\begin{array}{lcl}
\texttt{a.count().is(0)} &\mapsto& \texttt{a.limit(1).count().is(0)} 
\end{array}
\end{equation*}
\item \textbf{XGraphStepStrategy} (Vendor Optimization): Graph database vendors typically maintain indices over the properties of the vertices (and sometimes edges) of the graph. In order avoid $\mathcal{O}(|V|)$ linear costs, fold all has() steps into a vendor specific V() step in order to facilitate $\mathcal{O}(log(|V|))$ indexed-based lookups.
\begin{equation*}
\begin{array}{lcl}
\texttt{V.has().has().b} &\mapsto& \texttt{V[has,has].b} 
\end{array}
\end{equation*}
\item \textbf{MatchPredicateStrategy} (Optimization): If match() is followed by a where() step, fold that where() step into match() as a new branch pattern. In this way, the where()-based pattern is subject to the match() step's \texttt{XMatchAlgorithm} runtime optimizer. Furthermore, if match() maintains a has()-based pattern whose prefix variable is the input variable, then pull that pattern out of match() such that it may be used by the vendor for respective index lookups (see \textbf{XGraphStepStrategy}).
\begin{equation*}
\begin{array}{lcl}
\texttt{a.match(c,d).where(e).b} &\mapsto& \texttt{a.match(c,d,e)} \\
\texttt{a.match(has(),c,d).b} &\mapsto& \texttt{a.has().match(c,d).b}
\end{array}
\end{equation*}
\item \textbf{ProfileStrategy} (Finalization): If the user wants to get metrics about a traversal's performance, then a terminal profile() step serves as a place holder for later inserting profile() steps between each step in the traversal. Profiling metrics are maintained in a side-effect data structure.
\begin{equation*}
\begin{array}{lcl}
\texttt{a.b.profile()} &\mapsto& \texttt{a.profile().b.profile()} 
\end{array}
\end{equation*}
\item \textbf{ComputerVerificationStrategy} (Verification): Single-machine Gremlin (OLTP) is more flexible in the types of traversals it can execute. When a traversal is executed within a multi-machine compute cluster (OLAP), certain traversal sequences are not allowed. Over time, as advances are made to the distributed Gremlin machine, respective verifications will be removed accordingly.
\begin{equation*}
\begin{array}{lcl}
\texttt{a.order.b} &\mapsto& \texttt{error} \\  
\texttt{a.local(out().out()).b} &\mapsto& \texttt{error} 
\end{array}
\end{equation*}
\end{itemize}

\section{Distributed Graph Traversals}

Gremlin's \textit{graph computer} traversal machine was designed to support the distributed execution of a Gremlin traversal via the bulk synchronous parallel (BSP) model of distributed computing \cite{bsp:valiant1990}. In graph-based BSP, each vertex is a logical processor that receives messages (typically via adjacent neighbors), updates its state given its current state and its received messages, and send messages (typically to adjacent neighbors) \cite{vertex:mccune2015}. This process continues until there are no more messages being sent. The result of the computation is distributed across the state of all the vertices (e.g. a \textit{score}-property on each vertex). In Gremlin, the vertices receive traversers as messages, execute the traverser's traversal step identified by $\psi$, and for each traverser generated, sends a message to the respective vertex referenced by the traverser's $\mu$.\footnote{If the traverser references an edge, then the message is sent to the vertex maintaining that edge. If the traverser references some other object in $U$, then the traverser remains at the current vertex location.} For those traversers that have halted, the vertex saves the halted traversers in a ``hidden" vertex property value. This process continues until there are no more traversers being sent around the cluster. The aggregate of the locations of all halted traversers across all vertices in the graph is the result of the computation. 

When a graph is partitioned across a cluster, traversers migrate between the machines as dictated by their $\mu(t)$.\footnote{It is not required that the Gremlin graph computer machine operate over multiple physical machines. The only requirement is that the underlying graph system provide a logical partition of the vertices in $V$ and that each vertex in each partition has direct (non-remote) references to its properties, incident edges, and incident edge properties.} For instance, $V_1 \uplus V_2 \uplus V_3 = V$ denote three vertex partitions of $V$, where each partition is composed of a unique set of vertices of $V$ and their respective incident edges ($V_1 \cap V_2 \cap V_3 = \emptyset$). When traverser $t$ is located at $\mu(t) \in V_1$, the traverser will exist at machine 1. Suppose the step $\psi(t)$ is applied and the following traverser's are generated $(t',t'',t''')$ where, $\mu(t') \in V_2$, $\mu(t'') \in V_3$, and $\mu(t''') \in V_1$. Traversers $t'$ and $t''$ will be serialized and sent over the network to machines 2 and 3, respectively, for further execution. However, traverser $t'''$ will remain at machine 1 to execute its referenced step $\psi(t''')$. In this model of distributed graph traversal, it is advantageous to create a partition of $V$ that reduces costly inter-machine communication.

The Gremlin graph computer machine executes in a breadth-first manner as each traverser at every vertex is operated on in parallel. As graphs become large, the number of traversers can easily grow exponentially, especially in repeated one-to-many mappings. For example, in a 20x20 lattice with vertices having only one ``right" and one ``down" edge, the traversal
\begin{verbatim}
  g.V(topLeft).repeat(out()).times(40)
\end{verbatim}
will ultimately yield $\sim$138 billion traversers at the lattice's ``bottom right" vertex. The general equation for the number of traversers for any $n$x$n$ lattice is $\binom{2n}{n} = \frac{(2n)!}{(n!)^2}$. While the number of traversers can grow exponentially, the number of traverser locations is always bounded by the size of $U$, where for this particular lattice traversal, the upper bound is $|V| = 400$. Gremlin \textit{bulking} takes advantage of this traverser-to-location relation by projecting the (potentially exponentially growing) traverser set to a traverser set constrained to an upper limit $|U|$. As a result of this \textit{lossless compression}, in the 20x20 lattice example, the resultant traverser set contains only one traverser $t$ whose bulk is $\beta(t) = 137,846,528,820$. Traversers in $T$ are ``bulked" according to their respective equivalence class
\begin{equation*}
\begin{array}{lrcrl}
[t] = \{ t' \in T \;\; | & \mu(t) &=& \mu(t') & \wedge \\
                          & \psi(t) &=& \psi(t') & \wedge \\
                          & \Delta(t) &=& \Delta(t') & \wedge \\
                          & \varsigma(t) &=& \varsigma(t') & \wedge \\
                          & \iota(t) &=& \iota(t') & \} .
\end{array}
\end{equation*}
In other words, all components of the traverser's 6-tuple must be equal amongst all traversers in $[t]$ except for their respective bulk.\footnote{Many traversals do not require the labeled path (or a full labeled path) of a traverser and as such, in practice, when the labeled path is not required, then $\Delta(t) = \emptyset$. This helps to ensure a smaller number of equivalence classes in $T$ and thus, a greater likelihood of bulking. Note that when labeled paths are considered, typically, the number of equivalence classes grows proportionate to the number of unique $\Psi$ paths in $G$ which tends to grow exponentially in broad, non-filtering traversals.} The traversers in $[t]$ are reduced to a single traverser $t''$, where $\beta(t'') = \Sigma_i{\beta\left([t]_i\right)}$ and $|[t'']| = 1$. Instead of enumerating each traverser, each traverser is counted (with only one traverser existing in $T$ for each equivalence class). Bulking ensures that breadth-first Gremlin on large graphs (OLAP) does not require an exponential amount of memory. Furthermore, this optimization is leveraged in depth-first Gremlin (OLTP) via the barrier() step (and \texttt{LazyBarrierStrategy}), where
\begin{equation*}
\text{barrier} : \left[U^*\right] \rightarrow U^* 
\end{equation*}
and 
\begin{equation*}
\text{barrier}(T) = \bigcup_{\forall [t] \in T} t_{\beta(t) = \sum_{i}{\beta\left([t]_i\right)}}.
\end{equation*}

\section{Minimal Gremlin Traversal Machines} 

Gremlin, as a graph traversal machine, is an automaton. Automata are studied in computer science to understand the minimal structural and behavioral requirements necessary for some ``abstract machine" to perform a particular type of computation if it were to be physically built or modeled in another machine powerful enough to simulate its behavior \cite{hopcroft:automata1979}. In automata theory, computations are represented as problems in language transduction. Every language is composed of an alphabet of characters in $\Sigma$. Strings in the language are in $\Sigma^*$. If an input string is ``1 + 2" and an automaton produces ``3," then the automaton has the requisite computing power to perform addition (assuming it generalizes to any ``$x + y$"). 

As it stands, the \textit{Turing machine} is the minimal abstract machine required for general-purpose computing. A Turning machine can do any known ``mechanical" (algorithmic) computation. Restricted forms of a Turning machine are able to solve simpler problems. For instance, a finite-state automaton can be programmed to process regular languages and answer regular expressions. For example, the finite-state automaton programmed as $a^*b$ maps $b$, $ab$, $aab$, $aaab$, etc. to \textbf{true}.

The types of languages that particular minimal Gremlin machines can process are provided in the table below, where each minimal Gremlin machine is a reduction of the elements in the traverser's 6-tuple structure and/or a reduction in the set of possible steps that can be used to program $\Psi$. For Gremlin to simulate a Turing machine (as well as any less powerful automata), its instruction set must at minimum support values(), property(), sack(), choose(), repeat(), in(), and out().
\begin{center}
\begin{tabular}{| l | l | l |}\hline
\textbf{Automaton} & \textbf{Minimal Gremlin Machine} & \textbf{Language}  \\ \hline
FiniteState & $(U_\mu \times \Psi \setminus \text{property} \cup \text{in} \times U_\varsigma)$ & Regular   \\ \hline
Pushdown & $(U_\mu \times \Psi \setminus \text{in} \times U_\varsigma)$ & ContextFree \\ \hline
Turing & $(U_\mu \times \Psi \times U_\varsigma)$ & Recursive \\ \hline
\end{tabular}
\end{center}

\subsection{Turing Completeness}

This subsection defines a surjective function whose image of the full Gremlin traversal machine is isomorphic to a single-headed Turing machine. The aforementioned components of Gremlin are mapped to the components of a 5-tuple Turing machine
\begin{equation*}
M = (Q, F, \Gamma, \Sigma, \delta),
\end{equation*}
where $Q$ is the set of machine states, $F \subseteq Q$ is the set of legal halt states, $\Gamma$ is the readable/writable alphabet, $\Sigma \subseteq \Gamma$ is the initial input on an infinite one-dimensional tape of cells, and $\delta: \left((Q \setminus F) \times \Gamma\right) \rightarrow \left(Q \times \Gamma \times \{L,R\}\right)$ is the transition function which updates the machine's state, determines which symbol to write to the current tape cell, and whether it should then go ``left" or ``right" on the tape. 
\begin{equation*}
\begin{array}{l}
\ldots[\;\;][\;\;][\gamma][\;\;][\;\;][\;\;]\ldots\\
\;\;\;\;\;\;\;\;\;\;\;\;\;\;\uparrow \\
\;\;\;\;\;\;\;\;\;\;\;\;\;M[\delta]   
\end{array}
\end{equation*}
A machine is deemed Turing Complete if it can simulate the aforementioned single-headed Turing machine. If so, that machine can be programmed to execute any known algorithm \cite{compute:turing1937}.
\begin{theorem}
The Gremlin graph traversal machine is Turing Complete.
\end{theorem}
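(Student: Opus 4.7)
The strategy is to reduce in the other direction: given an arbitrary single-headed Turing machine $M = (Q, F, \Gamma, \Sigma, \delta)$, I would exhibit a concrete Gremlin machine $(G_M, \Psi_M, \{t_0\})$ whose computation, modulo a natural projection $\pi$, is step-for-step isomorphic to $M$'s. The preceding minimal-machine table already pins down the ingredients I may use --- locations $\mu$, program counter $\psi$, sack $\varsigma$, together with values(), property(), sack(), choose(), repeat(), in(), and out() --- so the construction only has to fit inside that budget.

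First I would encode the tape as the graph $G_M$: let $V = \{v_i\}_{i \in \mathbb{Z}}$ with a single edge $(v_i, v_{i+1})$ per cell, so that out() moves the head one cell right and in() one cell left, and attach a property ``sym'' $\in \Gamma$ to each $v_i$ initialised to the corresponding input character (blank outside the input region). The lone traverser $t_0$ is launched at $v_0$ with $\varsigma(t_0)$ set to the Turing-machine start state. The projection $\pi$ sending the triple $(\mu(t), \varsigma(t), \lambda(\cdot, \text{sym}))$ to $(\text{head position}, \text{state}, \text{tape contents})$ is visibly surjective onto configurations of $M$.

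Next I would build $\Psi_M$ as a single repeat() whose body is a choose() with one option per pair $(q, \gamma) \in (Q \setminus F) \times \Gamma$, the branch function keying on $(\varsigma(t), \text{values}_\text{sym}(\mu(t)))$. For the option associated with $\delta(q,\gamma) = (q', \gamma', d)$, I would execute in order sack() to overwrite the state with $q'$, property(``sym'', $\gamma'$) to write $\gamma'$ into the current cell, and then out() or in() depending on whether $d = R$ or $d = L$; the repeat()'s until() predicate fires exactly when $\varsigma(t) \in F$. One pass through the loop body then advances the Gremlin configuration by exactly one application of $\delta$, so $\pi$ is a bisimulation, and since the construction is parametric in $M$, its image under $\pi$ contains every Turing machine, establishing Turing completeness.

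The hard part, I expect, will not lie in any individual micro-step but in two modelling decisions. The first is how to justify the bi-infinite line: either $G_M$ must be admitted as a countably infinite graph in the usual Turing-tape idealisation, or one must lazily extend it with addV() (outside the stated minimal instruction set) as the head reaches its boundary. The second is the insistence on storing state in the sack rather than in $\psi$; carrying $Q$ inside the program counter alone would either multiply the choose() branches by $|Q|$ or demand non-primitive $\psi$-jumps, whereas putting $q$ in $\varsigma$ lets a single branch key on $(\varsigma(t), \text{sym})$ exactly as $\delta$ does, making the correspondence with $\delta$ literally step-for-step.
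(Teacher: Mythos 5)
Your proposal is correct and follows essentially the same route as the paper: an infinite line graph as the tape with out()/in() as head moves, the sack $\varsigma$ holding the machine state, and a repeat() wrapping a choose() that dispatches on (state, symbol) to realize $\delta$ --- exactly the construction the paper gives and illustrates with its busy-beaver example. Your explicit bisimulation framing and the remark on why state belongs in $\varsigma$ rather than $\psi$ are slightly more rigorous than the paper's prose, but the underlying argument is the same.
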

\begin{proof}
For the Gremlin traversal machine, the tape is the (infinite) graph $G$, where each vertex $v \in V$ has one incoming neighbor, one outgoing neighbor, a \textit{symbol}-property value in $\Gamma$ and initially, the symbols $\Sigma$ are the \textit{symbol}-property values of a consecutive chain of vertices. In other words, $G$ is a line graph with each vertex representing a cell in the Turing tape with respective input symbols. The Turing machine state is the traverser's sack $\varsigma(t) \in Q$. Assume the existence of only the following Gremlin steps:
\begin{equation*}
\begin{array}{rlcrl}
\text{values}_\text{symbol} &: V \rightarrow \Gamma & & \text{map} & \text{read tape} \\
\text{property}_{\text{symbol} \in \Gamma} &: V \rightarrow V & & \text{sideEffect} & \text{write tape} \\ 
\text{sack} & : V \rightarrow Q & & \text{map} & \text{read state} \\
\text{sack}_{q \in Q} & : V \rightarrow V & & \text{sideEffect} & \text{write state} \\
\text{choose} &: V \rightarrow V & & \text{branch} & \text{if/else} \\
\text{repeat} &: V \rightarrow V & & \text{branch} & \text{loop} \\
\text{in} &: V \rightarrow V & & \text{map} & \text{move left} \\
\text{out} &: V \rightarrow V & & \text{map} & \text{move right}.
\end{array}
\end{equation*}
The $\delta$ function of the Turing machine is the composition of the steps above to create $\Psi$. Note that the steps above map to traverser sets of size 1 as no step is a true flatMap() nor filter(). Given that there is only one ``left" and one ``right" adjacent neighbor to a vertex in the line graph, the above steps will never increase (nor decrease) the size of the traverser set. In this way, any single-headed Turing machine can be simulated.
\end{proof}

As an example, a 3-state ``busy beaver" Turing machine is defined, where $\Gamma = \{0,1\}$, $Q = \{\text{A},\text{B},\text{C},\text{HALT}\}$, and $\delta$ is the Gremlin traversal below.
\begin{verbatim}
g.V(1).
 sack("A").
 repeat(choose(values("symbol")).
  option("0", choose(sack()).
   option("A", 
    property("symbol","1").out().sack("B")).
   option("B", 
    property("symbol","1").in().sack("A")).
   option("C", 
    property("symbol","1").in().sack("B"))).
  option("1", choose(sack()).
   option("A", 
    property("symbol","1").in().sack("C")).
   option("B", 
    property("symbol","1").out().sack("B")).
   option("C", 
    property("symbol","1").out().sack("HALT")))).
 until(sack().is("HALT"))
\end{verbatim}

\subsection{A Universal Gremlin Machine}

A Universal Gremlin Machine (UGM) is a Gremlin machine that can simulate another Gremlin machine within its $G$, $\Psi$ and $T$ constructs \cite{gpsemnet:rodriguez2010,rodriguez:rvm2011}. The encoding to follow will represent both $\Psi$ and $T$ in $G$ \cite{grammar:rodriguez2008,ripple:shinavier2007}. Any step in $\Psi$ can be represented as a vertex $v \in V$, where $\lambda(v,\text{label}) = \text{step}$ and $\lambda(v,\text{op})$ is the operation of that step (e.g. $\text{out}$). If $u \in V$ is another step in $\Psi$ that follows $v$ then there exists an edge $(v,u) \in E$, where $\lambda((v,u),\text{label}) = \text{nextStep}$.\footnote{Further complications exist for nested traversals. However, for the sake of brevity, the general representation of such traversals are left to the reader to contemplate.} A traverser in $T$ can be represented by a vertex $t \in V$ where, $\lambda(t,\text{label}) = \text{traverser}$ and each element of the traverser's 6-tuple is a subgraph in $G$.
\begin{enumerate}
 \item $\mu(t)$ is the edge $(t,x) \in E$, where $\lambda((t,x),\text{label}) = \text{mu}$.
 \item $\psi(t)$ is the edge $(t,y) \in E$, where $\lambda((t,y),\text{label}) = \text{psi}$ and $\lambda(y,\text{label}) = \text{step}$.
 \item $\Delta(t)$ is a collection of edges to vertices that join step labels (as vertex properties) with graph locations in $G$.
 \item $\beta(t)$ is the traverser vertex's bulk property $\lambda(t,\text{bulk}) = \beta(t)$.
 \item $\varsigma(t)$ is the traverser vertex's sack property $\lambda(t,\text{sack}) = \varsigma(t)$.
 \item $\iota(t)$ is the traverser vertex's loop counter property $\lambda(t,\text{loops}) = \iota(t)$.
\end{enumerate}
The aforementioned mapping represents both $\Psi$ and $T$ as subgraphs of $G$ such that, in total, $G$ contains the complete representation of the structure (graph) and the process (traversal and traversers) of the computation. However, representation is not execution. In order for this graph structure to evolve (and thus, compute), there must exist a Universal Gremlin Machine traverser and respective traversal that moves between $G \setminus \left(\Psi \cup T\right)$ and the traversal $\Psi \subset G$ updating the respective edges and properties of the traversers $T \subset G$. The $G$-encoded traversal and traversers form a \textit{virtual machine} in the Universal Gremlin Machine. A snippet of the $\Psi_\text{UGM}$ is presented below where all supported steps would need to be represented in an \texttt{option(...)}.

\begin{verbatim}
  g.V().hasLabel("traverser").as("t").
   repeat(
    choose(out("psi").values("op")).
     option("out", 
      outE("mu").as("drop").inV().out().
      addInE("mu","t"))
     option("in",
      outE("mu").as("drop").inV().in().
      addInE("mu","t"))
     option(...)
     option(...)
    sideEffect(select("drop").drop()).
    select("t").
    outE("psi").as("drop").inV().
    out("nextStep").addInE("psi","t").
    sideEffect(select("drop").drop()).
    select("t")).
   until(out("psi").count().is(0))
\end{verbatim}

The $\Psi_\text{UGM}$ traversal loops over its repeat() traversal until the $G$-encoded traverser halts by no longer referencing a step vertex. The result of the computation is the multi-set union of the symbols on the ``tape"-subgraph $G \setminus \left(\Psi \cup T\right)$. Formally, 
\begin{equation*}
\text{result} = \biguplus_i^{|V|}{ 
\begin{cases}
\lambda\left(V_i, \text{symbol}\right) &: \lambda\left(V_i,\text{label}\right) \notin \{ \text{traverser}, \text{step} \} \\ 
\emptyset &: \text{otherwise}.
\end{cases}}
\end{equation*}

In order to provide a Universal Gremlin Machine that can operate on $G$-encoded Gremlin machines that maintain the same level of expressivity as the Gremlin traversal machine discussed in this article, it would be necessary to extend the above $\Psi_\text{UGM}$ traversal to account for the growing and shrinking of $G$-encoded traverser sets as well as all the steps of Gremlin's instruction set.

\subsection{Parallel Universal and $G$-Encoded Machines}

It has been assumed, up to this point, that the traversers in $T$ all reference steps of the same $\Psi$. However, nothing prevents multiple traverser sets to exist, where each set operates under a different traversal. In fact, regardless of $G$-encoded machines, this is necessary for allowing parallel, concurrent traversals/queries of $G$. With respect to $G$-encoded machines, the Universal Gremlin Machine need not concern itself with which traverser of which traversal it is executing. In fact, the Universal Gremlin Machine simply needs to find any traverser that has yet to halt and execute its next step. The Universal Gremlin Machine acts as a \textit{thread} evolving the state of different traversals/programs. However, in order to get a well defined result set for each traversal, a $\Psi$-unique identifier would need to be appended to each traverser so that the result of some traversal $\Psi^{123}$ can be unambiguously gathered via
\begin{equation*}
\text{result}_{\Psi^{123}} = \biguplus_i^{|V|}{
\begin{cases}
\mu(V_i) &: \lambda\left(V_i,\text{label}\right) = \text{traverser} \;\; \wedge \\ & \;\;\; \lambda\left(V_i,\text{psiId}\right) = 123 \;\; \wedge \\ & \;\;\;\;\; \lambda\left(V_i,\text{psi}\right) = \emptyset \\
\emptyset &: \text{otherwise}.
\end{cases}}
\end{equation*}
Finally, nothing prevents multiple Universal Gremlin Machines operating in parallel against $G$ locating active traversers and executing a step until no more traversers exist or all traversers have halted. This is, in fact, analogous to a multi-threaded system.

\subsection{Traversing a Gremlin Traversal Machine}

When the graph $G$, the traversal $\Psi$, and the traverser set $T$ are all encoded in $G$, then all the components of a Gremlin traversal machine exist in the same address space -- namely $G$. A consequence of this co-location is that a traverser can, in principle, traverser its own structure. Similarly, a traverser can traverse its traversal. When a machine has direct reference to its representation, a machine can not only analyze itself via \textit{reflection}, but it can also rewrite itself. The ramifications of this consequence, with respects to applied graph computing, are left to future ruminations.

\subsection{A Primordial Graph Traversal Machine}

This section describes, at a high-level, a vision of graph computing that is, in many ways, analogous to the token rewrite model of the lambda calculus \cite{lambda:chruch1936}. A lossless, injective function takes a multi-relational, attributed digraph (\textbf{MADG}) to a multi-relational, unattributed digraph (\textbf{MDG}), where edges are reified structures and all properties are ``property key"-labeled edges incident to ``property value" vertices \cite{rdfproperty:hartig2014}. Next, there exists an injective function that maps a multi-relational digraph to an unlabeled digraph (\textbf{DG}), where labels are encoded as ``binary vertex chains" \cite{mapnetwork:rodriguez2009}. Finally, another injective function has been defined that maps a digraph to an undirected graph (\textbf{UG}), where edge directions are represented as topological features of the undirected form \cite{mapnetwork:rodriguez2009}. 
\begin{equation*}
\textbf{MADG} \mapsto \textbf{MDG} \mapsto \textbf{DG} \mapsto \textbf{UG}
\end{equation*}
Given the existence of this mapping, the complete state of computing (i.e. $G$, $\Psi$, and $T$) can be represented by a single undirected graph whose structure is solely the composition of ``dots and lines" in some $n$-dimensional space. In this primitive, verbose graph, there are no labels, strings, numbers, etc., simply dots connected to each other by lines. Computing occurs when subgraphs of a particular shape (e.g. a traverser at a location in the graph) morph to form new subgraphs of a particular shape (e.g. new traversers with new graph locations). Computing, in this manner, can be conceptualized as a chemical reaction where ``molecular structures" (undirected subgraphs) interact with adjacent structures to yield new structures that may elicit yet more reactions \cite{cot:dittrich2007}. In this primordial world, the computation is complete when vertices and edges are no longer being created nor destroyed. When the undirected graph reaches an equilibrium with its ``laws of physics," the problem is solved -- for it has reached a stable state.

\section{Conclusion}

Gremlin is a graph traversal machine and language. The Gremlin machine specification is simple to describe and ultimately implement. The complexity of the computations that Gremlin enables is not necessarily due to its constructs, but due to the data sets being processed. Graphs are multi-dimensional structures able to model a heterogenous set of ``things" related to each other in a heterogenous set of ways -- all within a single, connected data structure. When a Gremlin traversal is evaluated against a graph, billions upon billions of traversers can be generated on even small graphs due to the exponential growth of the number of paths that exist with each step the traversers take. With so many forks in the road, traversers continually split themselves in order to explore each option that meets the constraints of the traversal they obey. When these traversers ultimately halt, they provide an answer to the question specified by their traversal, which was programmed by a user via the Gremlin traversal language. 

\acks
The Apache TinkerPop project (\texttt{http://tinkerpop.com}) was started in November 2009 and is currently in its third generation of development with TinkerPop3 having been released in July of 2015. Many individuals have contributed to the project and wider ecosystem over the years and their contributions, both theoretical and applied, have been invaluable to the generation of the ideas presented in this article.

\bibliography{../marko}
\bibliographystyle{abbrvnat}

\end{document}